\newcommand{\E}{\mathbb{E}}
\newcommand{\nx}{\|\bx\|_2}
\newcommand{\tg}{g}
\newcommand{\bx}{\boldsymbol{x}}
\newcommand{\by}{\boldsymbol{y}}
\newcommand{\bg}{\boldsymbol{g}}
\newcommand{\bh}{\boldsymbol{h}}
\newcommand{\bm}{\boldsymbol{m}}
\newcommand{\bPhi}{\boldsymbol{\Phi}}
\DeclareMathOperator*{\argmin}{arg\,min}
\newtheorem{thm}{Theorem}[section]
\newtheorem{cor}[thm]{Corollary}
\newtheorem{rmk}[thm]{Remark}
\newtheorem{defn}[thm]{Definition}
\begin{document}

\title{Optimizing quantization for Lasso recovery}
\author{Xiaoyi~Gu, Shenyinying~Tu, Hao-Jun~Michael~Shi, Mindy~Case, Deanna~Needell, and~Yaniv~Plan
\thanks{X.G., S.T., H.M.S., and M.C. are with Univ. of California, Los Angeles CA 90095, USA. D.N. is with Claremont McKenna College, Claremont CA 91711, USA. Y.P. is with Univ. of British Columbia, Vancouver BC V6T 1Z2, Canada.}
}

\maketitle
\begin{abstract}
This letter is focused on quantized Compressed Sensing, assuming that Lasso is used for signal estimation. Leveraging recent work, we provide a framework to optimize the quantization function and show that the recovered signal converges to the actual signal at a quadratic rate as a function of the quantization level. We show that when the number of observations is high, this method of quantization gives a significantly better recovery rate than standard Lloyd-Max quantization. 
We support our theoretical analysis with numerical simulations.
\end{abstract}

\section{Introduction} \label{intro}

We consider the structured linear model $\by = \bPhi \bx \in \mathbb{R}^M$, $\bx \in K \subset \mathbb{R}^N$, $\bPhi \in \mathbb{R}^{N\times M}$.  Given $\by, \bPhi, K$, a goal of interest is to recover $\bx$.  Here, $K$ can encode sparsity as in the \textit{Compressed Sensing} (CS) setting, or more generally, small total variation, low-rank matrices, sparsity in a dictionary, etc.

In the noisy model ($\by \approx \bPhi \bx$), this problem can be solved by minimizing the $\ell_2$ loss function, called $K$-Lasso in \cite{planNonlinear}:
\begin{equation} \label{goal}
\text{minimize      } \|\bPhi \bx' - \by \|_2 \quad \text{       subject to       } \quad \bx' \in K.
\end{equation}

In practice, $\bPhi \bx$ must be quantized to fit in the digital domain, inducing inaccuracy in $\by$.  Work on quantization in CS includes worst case or average distortion \cite{11,Ref_SP}, reconstruction error bounds for Basis Pursuit \cite{7}, the CoSaMP method \cite{Ref_SP,5,6}, and relaxed Brief Propogation  \cite{Relaxed_BP}. The uniform quantizers \cite{Uniform,9,RefWorks:6} and standard Loyd-Max quantizers \cite{Ref_SP} are some typical examples of quantization schemes. 

Recently, Plan and Vershynin \cite{planNonlinear} analyzed the non-linear model 
\begin{equation} \label{model}
\by_i = f((\bPhi \bx)_i),~~ \forall i = 1, \dots M, \qquad  \quad \bx \in K,
\end{equation}
\noindent for some non-linear function $f$, giving tight recovery error bounds for $K$-Lasso \eqref{goal}.
In this letter, we specialize and synthesize these bounds when the non-linearity encodes quantization.
We propose to choose the quantization function which optimizes the tight error bound.  This method has also been recently proposed in \cite{thrampoulidis2015lasso}, based on a different analysis and without the focus of quantization.  
We carefully simplify and bound the error rate with the optimized quantization function, and show that often it significantly outperforms conventional quantization methods.

\section{Preliminaries and Model} \label{sec2}
\paragraph{Preliminaries} 

We use the same notation as in \cite{planNonlinear}. We say an event has ``high probability'' if it has probability at least 0.99. The notation $\lesssim$ hides an absolute constant, and $g$ ($\bg$) is a standard normal variable (vector). Throughout the paper, we assume that $\bPhi$ has independent standard normal entries,  $\by$ follows the model \eqref{model}, and $M \gtrsim d(K)$.  We utilize the \textit{mean width} and \textit{tangent cone} to give an effective measure of the dimension of $K$.  For more background on \textit{local mean width}, see \cite{planNonlinear}.
\begin{defn}
Let $B_2 \subset \mathbb{R}^N$ be the unit Euclidean ball.  The {\textsl{local mean width}} of a subset $K \subset \mathbb{R}^N$ is defined as $w(K) = \E \sup_{\bx \in K \cap B_2} \langle \bx, ~\bg\rangle.$  
The \textsl{tangent cone} of set $K$ at $\bx$ is 
$D(K,~\bx) := \{\tau \bh : \tau \geq 0, ~ \bh \in K - \bx \}.$
\end{defn}
\begin{thm} \label{plan} \cite{planNonlinear}
Suppose $\bx \in S^{N-1}$, and $\mu \bx \in K$. Let $d(K) := w(D(K,\mu \bx))^2$.   Then with high probability, the solution $\bx'$ to problem \eqref{goal} satisfies
\begin{align} \label{bound}
\| \bx' - \mu \bx \|_2 \lesssim \left(\sqrt{d(K)} \sigma + \eta\right)/\sqrt{M}&,
\end{align}
where $\mu = \mathbb{E}[f(g)\cdot g],$  $\sigma^2 = \E[(f(g) - \mu g)^2],$ and $\eta^2 = \E[(f(g) - \mu g)^2 g^2]$.
\end{thm}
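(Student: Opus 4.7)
The plan is to follow the standard optimality-plus-deviation argument for constrained Lasso, adapting it to the non-linear model by exploiting that $\bx \in S^{N-1}$ to decouple the ``noise'' from the measurement matrix.

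First I would write down the basic inequality coming from optimality. Since $\mu \bx \in K$ is feasible, the minimizer $\bx'$ satisfies $\|\bPhi \bx' - \by\|_2 \le \|\bPhi \mu \bx - \by\|_2$. Setting $\bh := \bx' - \mu \bx$ and $\bn := \by - \mu \bPhi \bx$ (whose entries are $n_i = f(g_i) - \mu g_i$ with $g_i = (\bPhi \bx)_i \sim \cN(0,1)$ i.i.d.\ because $\|\bx\|_2 = 1$), squaring and expanding gives
\begin{equation*}
\|\bPhi \bh\|_2^2 \le 2 \langle \bPhi \bh, \bn \rangle.
\end{equation*}
Note that by the definition $\mu = \E[f(g) g]$ we have $\E[g_i n_i] = 0$, so $\bn$ is \emph{uncorrelated} with $\bPhi \bx$; and by construction $\bh \in D(K, \mu \bx)$. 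The task is therefore to lower-bound the left side and upper-bound the right side uniformly over $\bh$ in this tangent cone.

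For the lower bound I would invoke a Gordon/Mendelson-type small ball inequality: with high probability, $\|\bPhi \bh\|_2 \gtrsim \sqrt{M}\,\|\bh\|_2$ uniformly for $\bh \in D(K,\mu\bx)$, provided $M \gtrsim w(D(K,\mu\bx))^2 = d(K)$, which is exactly our assumption. The upper bound on $\langle \bPhi \bh, \bn\rangle$ is the delicate step because $\bn$ depends on $\bPhi$ through $\bPhi \bx$. To decouple them I would use the orthogonal decomposition
\begin{equation*}
\bPhi \bh = \langle \bh, \bx \rangle \, \bPhi \bx + \bPhi (I - \bx \bx^T) \bh,
\end{equation*}
noting that $\bPhi (I - \bx \bx^T)$ is jointly Gaussian and independent of $\bPhi \bx$, hence independent of $\bn$. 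This splits the inner product into two cleanly analyzable pieces.

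The first piece is the scalar $\langle \bh, \bx\rangle \cdot \langle \bPhi \bx, \bn\rangle = \langle \bh, \bx\rangle \sum_{i=1}^M g_i n_i$; the summands are mean zero with variance $\E[g^2 (f(g) - \mu g)^2] = \eta^2$, so by concentration this is $\lesssim \|\bh\|_2 \sqrt{M}\, \eta$ with high probability. For the second piece, conditioned on $\bn$, a standard Gaussian width bound (Chevet/Mendelson) gives
\begin{equation*}
\sup_{\bh \in D(K,\mu\bx) \cap B_2} \langle \bPhi (I - \bx\bx^T) \bh, \bn\rangle \lesssim \|\bn\|_2 \cdot w(D(K,\mu\bx) \cap B_2),
\end{equation*}
and $\|\bn\|_2 \lesssim \sqrt{M}\,\sigma$ by concentration of $\sum n_i^2$ around $M \sigma^2$. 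After rescaling by $\|\bh\|_2$ this yields a bound of order $\sqrt{M}\,\sigma\, \sqrt{d(K)}\,\|\bh\|_2$. Combining the lower and upper bounds gives $M \|\bh\|_2^2 \lesssim \|\bh\|_2 \bigl(\sqrt{M}\,\eta + \sqrt{M}\,\sigma\sqrt{d(K)}\bigr)$, from which dividing by $\sqrt{M}\|\bh\|_2$ produces the claimed rate.

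The main obstacle is the step where $\bn$ and $\bPhi$ are statistically entangled; the trick of projecting onto $\bx$ and its orthogonal complement (which is what makes the sphere hypothesis $\bx \in S^{N-1}$ essential) is where the two distinct error terms $\eta$ and $\sigma \sqrt{d(K)}$ arise. Uniformizing the Gaussian width bound over the entire tangent cone (not just a fixed direction) is the other technical step, but it follows from standard concentration of suprema of Gaussian processes together with a union bound argument on $\|\bn\|_2$.
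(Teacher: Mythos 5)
This theorem is imported from \cite{planNonlinear} without proof in the present paper, so the comparison is to the argument in that reference; your reconstruction is correct and follows essentially the same route. The optimality inequality, the Gordon-type lower isometry over the tangent cone under $M \gtrsim d(K)$, and in particular the orthogonal split $\bPhi\bh = \langle \bh,\bx\rangle\,\bPhi\bx + \bPhi(I-\bx\bx^T)\bh$ that decouples the noise from the matrix and yields the two separate terms $\eta$ and $\sigma\sqrt{d(K)}$ are exactly the ingredients of the Plan--Vershynin proof.
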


The quantity $d(K)$ plays the role of the dimension of $K$ locally near the point $\mu x$.  For example, if $K$ is an $n$-dimensional subspace then $d(K) \approx n$. 
\paragraph{Model}
We assume the model \eqref{model} with $f = f_Q$ for a quantization function $f_Q$.  We assume that $\bx \in S^{N-1}$, or equivalently that $\|\bx\|_2$ is known (and can thus be scaled); in Section \ref{robust} we relax this assumption.

We denote the \textit{quantizer} or \textit{quantization function} as 
\begin{equation}\label{fqdef}
f_Q(x) = m_i \,\,\,\text{ for } \,\,\,x \in [\tau_i, \tau_{i+1}]
\end{equation}
\noindent where $\{\tau_1 = -\infty, \tau_2, \dots, \tau_{Q+1}= \infty \}$ partitions the real line and $\{m_1, m_2, \dots, m_{Q} \}$ are the associated quantization values. 

A conventional Lloyd-Max quantizer is derived from the following idea.  The distribution of $(\bPhi \bx)_i$ is standard normal and thus known. It is natural to choose the quantizer to minimize the Mean Squared Error (MSE) function defined as 
\begin{equation} \label{opeq}
MSE := \E[(f_Q(g) - g)^2] = \sum_{i = 1}^Q \int_{\tau_i}^{\tau_{i+1}} (m_i - \tau)^2 p_g(\tau) d\tau,
\end{equation}
\noindent where $p_g$ is the probability density function of the standard normal $g$. Such a minimization problem can be solved iteratively by using the Lloyd Max algorithm \cite{LloydMax}. 

Note that the MSE only helps control the right-hand side of \eqref{bound}, but not the scaling factor $\mu$.  This is typically not one, and may thus lead to sub-optimal recovery error.  In this letter, we investigate the behavior of $K$-Lasso with non-linear measurements if $f = f_Q$ is obtained from (i) minimizing the MSE \eqref{opeq} (conventional Lloyd-Max quantization) and (ii) minimizing the MSE \eqref{opeq} with restriction to $\mu = 1$ (our proposed quantization function ).  We find that enforcing the latter restriction can significantly improve the error rate.  That is the main result of our paper.  We also note that both methods of quantization do not require knowledge of the structure, $K$, that is used for the $K$-Lasso.  Thus, the results of this paper can be useful both for the various signal structures associated with CS, and also when $x$ belongs to a linear subspace, and vanilla least-squares estimation is used.
\section{Theoretical Results} \label{sec3}
\subsection{Unit norm signals and MSE without restriction} \label{no_restriction}

When $f_Q$ is taken to minimize the MSE \eqref{opeq}, we have the following corollary.
\begin{cor}
Suppose $\bx \in S^{N-1}$ and the quantizer of the form \eqref{fqdef} is the minimizer:
\begin{equation}
f_Q = \argmin_{m_i, ~\tau_i}\,\,\,\E[(f_Q(g) - g)^2].
\end{equation}
\noindent Then with high probability, the solution $\bx'$ to problem \eqref{goal} satisfies
\begin{equation} \label{conclusionnoise}
|Q^{-2} - \frac{\sqrt{d(K)}(Q^{-2} - Q^{-4}) + 2 - Q^{-2}}{\sqrt{M}}|  \lesssim \| \bx' - \bx \|,
\end{equation}
and
\begin{equation*}
\| \bx' - \bx \| \lesssim \frac{\sqrt{d(K)}(Q^{-2} - Q^{-4}) + 2 - Q^{-2}}{\sqrt{M}} + Q^{-2}.
\end{equation*}
\end{cor}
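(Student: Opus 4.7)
The plan is to apply Theorem~\ref{plan} directly with $f = f_Q$ and convert the resulting bound on $\|\bx' - \mu\bx\|_2$ into a bound on $\|\bx' - \bx\|_2$ via two triangle inequalities. Since $\bx \in S^{N-1}$, we have $\|\bx - \mu\bx\|_2 = |1 - \mu|$, so the triangle inequality and its reverse yield
\[
\bigl|\,\|\bx' - \mu\bx\|_2 - |1-\mu|\,\bigr| \;\le\; \|\bx' - \bx\|_2 \;\le\; \|\bx' - \mu\bx\|_2 + |1 - \mu|.
\]
Both inequalities of the corollary follow by substituting the Theorem~\ref{plan} bound into the right-hand side of each. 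The remaining work is to compute $\mu$, $\sigma^2$, and $\eta^2$ for the Lloyd-Max minimizer and show they reduce to the claimed explicit expressions.

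The key observation is the Lloyd-Max \emph{centroid condition} $m_i = \E[g \mid g \in [\tau_i,\tau_{i+1}]]$, a necessary first-order optimality condition. This gives the identity
\[
\mu = \E[f_Q(g)\,g] = \sum_i m_i \int_{\tau_i}^{\tau_{i+1}} \tau\, p_g(\tau)\, d\tau = \sum_i m_i^2\, P\!\left(g \in [\tau_i,\tau_{i+1}]\right) = \E[f_Q(g)^2].
\]
Combined with $\E[g^2]=1$, this collapses the MSE into $\E[(f_Q(g)-g)^2] = 1 - \mu$, so the classical high-rate analysis of Lloyd-Max quantization (Panter--Dite) gives $1 - \mu \asymp Q^{-2}$. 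Immediately, $\sigma^2 = \E[f_Q(g)^2] - \mu^2 = \mu(1-\mu) \asymp Q^{-2} - Q^{-4}$, which accounts for the coefficient of $\sqrt{d(K)}$ in the bound.

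For $\eta^2 = \E[(f_Q(g) - \mu g)^2 g^2]$, I would use the decomposition $(f_Q(g) - \mu g)^2 \le 2(f_Q(g) - g)^2 + 2(1-\mu)^2 g^2$ together with a bin-wise estimate of $\E[(f_Q(g) - g)^2 g^2]$; the $(1-\mu)^2 \E[g^4] = 3(1-\mu)^2$ contribution is of order $Q^{-4}$ and is absorbed into lower-order terms. Substituting the resulting expressions for $\mu$, $\sigma$, and $\eta$ into Theorem~\ref{plan}, and then plugging the bound on $\|\bx' - \mu\bx\|_2$ into the two triangle inequalities, produces both the upper bound and the absolute-valued lower bound stated in the corollary.

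The main obstacle is the careful estimation of $\eta^2$. Unlike the MSE, this quantity is weighted by $g^2$, which places heavier mass on the two unbounded outer bins of the quantizer, so the tail contribution must be controlled separately from the interior bins in order to extract the constant $2 - Q^{-2}$. In contrast, the centroid identity $\mu = \E[f_Q(g)^2]$ makes the computations of $\mu$ and $\sigma^2$ essentially mechanical, and the high-rate rate $1-\mu \asymp Q^{-2}$ can be invoked as a black box.
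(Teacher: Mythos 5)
Your proposal is correct in outline and coincides with the paper's proof in almost every component: the same centroid condition giving $\mu=\E[f_Q(g)^2]$, the same collapse of the MSE to $1-\mu=e_Q$, the same $\sigma^2=\mu-\mu^2=e_Q-e_Q^2$, the same invocation of the Panter--Dite high-rate formula $e_Q\asymp Q^{-2}$, and the same pair of triangle inequalities around $\mu\bx$ for the upper and lower bounds. The one genuinely different step is the treatment of $\eta^2$. The paper does \emph{not} exploit the cancellation in $f_Q(g)-\mu g$: it applies Minkowski to split $\eta\le \E[f_Q(g)^2g^2]^{1/2}+\mu\,\E[g^4]^{1/2}$, then H\"older plus a bin-wise comparison showing $\E[f_Q(g)^4]\le\E[g^4]=3$, arriving at $\eta\le\sqrt{3}(1+\mu)=\sqrt{3}(2-e_Q)$ --- and this crude additive split is precisely where the non-vanishing constant $2-Q^{-2}$ in the corollary comes from. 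Your decomposition $(f_Q(g)-\mu g)^2\le 2(f_Q(g)-g)^2+2(1-\mu)^2g^2$ does keep the cancellation, so if you carry out the bin-wise/tail estimate of $\E[(f_Q(g)-g)^2g^2]$ you will find $\eta^2\to 0$ as $Q\to\infty$; this is a \emph{stronger} conclusion that still implies the stated bound (since $\lesssim$ absorbs constants and $2-Q^{-2}\ge 1$), but your closing remark that the tail analysis is needed ``to extract the constant $2-Q^{-2}$'' is backwards --- that constant is an artifact of the paper's looser bound, not something your route produces. Also note that the delicate tail control you anticipate can be sidestepped: for the corollary you only need $\eta=O(1)$, which follows from Cauchy--Schwarz, $\E[(f_Q(g)-g)^2g^2]\le\E[(f_Q(g)-g)^4]^{1/2}\E[g^4]^{1/2}$, once one has the same fourth-moment bound $\E[f_Q(g)^4]\le 3$ that the paper establishes bin-wise. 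So the proposal is sound, slightly under-developed at the $\eta$ step, and would in fact yield a marginally sharper constant term than the published bound.
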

\begin{proof}
We use the notation of Theorem \ref{plan} and explicitly compute $\mu$, $\sigma$ and $\eta$. Since $f_Q$ minimizes the MSE, we differentiate \eqref{opeq} with respect to $\tau_i$ and $m_i$ to get 
\begin{equation} \label{der1}
\int_{\tau_i}^{\tau_{i+1}} (m_i - \tau) p_{g}(\tau) \,d\tau = 0,~~~~~~\tau_i = \frac{m_i+m_{i-1}}{2},  ~~~~~~\forall i.
\end{equation}
By the definition of $\mu$ \eqref{bound} and \eqref{der1}
\begin{equation}
\mu = \sum_{i = 1}^Q \int_{\tau_i}^{\tau_{i+1}} m_i^2 p_{g}(\tau)\,d\tau = \E[f_Q(\tg)^2].
\end{equation}
Next, set 
$$e_Q := \inf_{m_i, \tau_i} MSE 
 = \E[f_Q(\tg)^2] - 2 \E[f_Q(\tg) g] + \E[g^2]
= -\mu + 1.$$ Then, $\sigma^2 = \E[f_Q(\tg)^2] - \mu^2= \mu - \mu^2 = e_{Q} - e_{Q}^2.$ 
By the Minkowski and H\"{o}lder inequalities, 
\begin{align*}
\eta^2 
&\leq (\E[f_Q(\tg)^2 g^2]^{1/2} + \E[\mu^2 g^4]^{1/2})^2\\ 
&\leq (\E[f_Q(\tg)^4]^{1/4} \E[g^4]^{1/4}+ \sqrt{3}\mu)^2\\
&= (3^{1/4}\E[f_Q(\tg)^4]^{1/4}+ \sqrt{3}\mu)^2.
\end{align*}
 In addition, 
\begin{align} \label{eta}
\E[g^4] - \E[f_Q(\tg)^4] &= \sum_{i = 1}^Q \int_{\tau_i}^{\tau_{i+1}} (\tau^4 - m_i^4) p_g(\tau) \, d\tau \\
&\geq C \sum_{i = 1}^Q \int_{\tau_i}^{\tau_{i+1}} (\tau^2-m_i^2) p_g(\tau) \, d\tau \notag\\
&= C e_Q \geq 0,\notag
\end{align}
\noindent for some positive constant C, so we have $\E[f_Q(\tg)^4] \leq \E[g^4] = 3$. Summarizing, $\eta^2 \leq 3(\mu + 1)^2 \leq 3 (2 - e_Q)^2$. As in e.g. \cite{Quantization}, define $R(f_Q): = \log_2{Q}$, which represents the rate of quantizer coding. Then we have 
\begin{equation}\label{e_Q}
e_Q = \inf_{f_Q: R_{f_Q} \leq R} MSE \cong \frac{1}{12} (\int p_g(\tau)^{1/3} \, d\tau)^3 2^{-2R} \cong \frac{1}{12} 6 \pi \sqrt{3} \, Q^{-2}.
\end{equation}
Substituting \eqref{e_Q} into the above bounds for $\mu$, $\sigma^2$, $\eta^2$ completes the proof.

The lower bound of $\|\bx' - \bx \|$ follows from the fact that $\| \bx' - \bx \|_2 \geq |\|\mu \bx - \bx \|_2 - \| \bx' - \mu \bx \|_2 |= |~ |\mu-1| - \| \bx' - \mu \bx \|_2|$. 
\end{proof}
We see that as $Q$ increases, $\bx'$ converges quadractically to $\bx$. The constant $2$ in the numerator eventually fades out as the number of measurements increases. 

\subsection{Optimal Quantization with restriction $\mu = 1$}\label{restriction} 

We next minimize \eqref{opeq} while enforcing $\mu = 1$ to obtain a bound for $\|\bx' - \bx \|_2$ directly, and compare the results to the previous section. 
\begin{cor}
Suppose $\bx \in S^{N-1}$, and the quantizer of the form \eqref{fqdef} is the minimizer:
\begin{equation}
f_Q = \argmin_{m_i, ~\tau_i} \E[(f_Q(g) - g)^2]~~~~~~ \text{ s.t. } \mu\ = 1,
\end{equation}
\noindent with $\mu$ defined in \eqref{bound}. Then with high probability, the solution $\bx'$ to problem \eqref{goal} satisfies
\begin{equation} \label{eq: proposed method bound}
\| \bx' - \bx \| \lesssim \frac{\sqrt{d(K)} (Q^{-2} + Q^{-4}) + 2+(Q^{-2} + Q^{-4})}{\sqrt{M}}.
\end{equation}
\end{cor}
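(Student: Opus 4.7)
The plan is to mirror the argument of the previous corollary, now with the constraint $\mu=1$ baked into the optimization.

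First I would invoke Theorem \ref{plan} directly. Because the quantizer is constructed so that $\mu=1$, the inequality $\|\bx'-\mu\bx\|_2 \lesssim (\sqrt{d(K)}\,\sigma+\eta)/\sqrt{M}$ becomes
\[
\|\bx'-\bx\|_2 \;\lesssim\; \bigl(\sqrt{d(K)}\,\sigma+\eta\bigr)/\sqrt{M},
\]
with no extra bias term $|\mu-1|$ on the right (this is precisely what makes this scheme more attractive than the unrestricted Lloyd--Max one). Everything then reduces to estimating $\sigma$ and $\eta$ for the constrained minimizer.

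Second, I would derive the stationarity conditions by Lagrange multipliers. Introducing a multiplier $\lambda$ for the constraint $\E[f_Q(g)\,g]=1$ and differentiating
\[
L(m_i,\tau_i,\lambda)\;=\;\E[(f_Q(g)-g)^2]\;-\;\lambda\bigl(\E[f_Q(g)\,g]-1\bigr)
\]
in $m_i$ and $\tau_i$ produces Lloyd--Max-type equations: $\tau_i=(m_{i-1}+m_i)/2$ still holds, while the update for $m_i$ becomes a $\lambda$-weighted conditional mean. With $\mu=1$ in hand, the identity $\E[f_Q(g)^2]-2\E[f_Q(g)\,g]+\E[g^2] = \text{MSE}$ collapses $\sigma^2=\E[(f_Q(g)-g)^2]$ into simply the constrained optimum MSE, which I will denote $\tilde e_Q$.

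Third, I would estimate $\tilde e_Q$ by producing a concrete feasible competitor rather than solving the constrained program from scratch. The natural choice is to take the unconstrained Lloyd--Max quantizer $f_Q^{\mathrm{LM}}$ (for which $\mu_{\mathrm{LM}} = 1-e_Q$) and rescale its levels by $1/\mu_{\mathrm{LM}}$, which enforces $\mu=1$. A short expansion around $e_Q\to 0$ then gives $\tilde e_Q \le e_Q + O(e_Q^2)$, so by the high-resolution formula \eqref{e_Q} we obtain $\tilde e_Q \lesssim Q^{-2}+Q^{-4}$. This simultaneously supplies the bound on $\sigma$ and quantifies the extra cost of the restriction as a higher-order term.

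Fourth, I would bound $\eta$ by reusing the Minkowski and H\"older decomposition from the previous corollary: $\eta\le 3^{1/4}\E[f_Q(g)^4]^{1/4}+\sqrt{3}\mu$. The telescoping inequality \eqref{eta} only uses $\tau_i=(m_{i-1}+m_i)/2$, which still holds here, so $\E[f_Q(g)^4]\le\E[g^4]=3$, and with $\mu=1$ this yields $\eta \lesssim 2 + O(Q^{-2}+Q^{-4})$, which matches the additive ``$2+(Q^{-2}+Q^{-4})$'' contribution in \eqref{eq: proposed method bound}. Substituting the two estimates into the Theorem \ref{plan} bound then yields the claim.

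The main obstacle is the third step: showing that forcing $\mu=1$ inflates the MSE by only a higher-order term. The rescaled-Lloyd--Max competitor gives a clean way to sidestep a direct analysis of the coupled Lagrange equations, which would require solving for $\lambda$ self-consistently and is where I expect the only real bookkeeping to appear.
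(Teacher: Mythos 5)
Your proposal is correct in outline and matches the paper's skeleton (invoke Theorem \ref{plan} with the bias term killed by $\mu=1$, derive stationarity via Lagrange multipliers, reduce everything to bounding the constrained optimal MSE $e_Q'$ and the fourth moment of $f_Q(g)$), but your third step takes a genuinely different and arguably cleaner route. The paper bounds $e_Q'$ by a local perturbation argument: it computes the $\ell_1$ norms of $d\mu/d\bm$ and $d\,\mathrm{MSE}/d\bm$ near the unconstrained Lloyd--Max optimum $\{m_i^*,\tau_i^*\}$, finds a point $\{m_i'\}$ in a $\delta$-ball with $\mu=1$, and concludes $|e_Q'-e_Q|\le 2\delta^2 = 2e_Q^2/\tilde C^2$. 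Your explicit feasible competitor --- rescaling the Lloyd--Max levels by $1/\mu_{\mathrm{LM}}$, which gives $\mathrm{MSE} = 1/\mu_{\mathrm{LM}} - 1 = e_Q/(1-e_Q) = e_Q + O(e_Q^2)$ using the centroid identity $\E[f_Q^{\mathrm{LM}}(g)^2]=\mu_{\mathrm{LM}}$ --- reaches the same $e_Q'\lesssim Q^{-2}+Q^{-4}$ with a one-line computation and avoids the Lipschitz bookkeeping entirely; this is a genuine improvement in rigor and transparency. One concrete error to fix: your claim that the threshold condition $\tau_i=(m_{i-1}+m_i)/2$ ``still holds'' under the constraint is false --- differentiating the Lagrangian in $\tau_i$ gives $\tau_i=(m_{i-1}+m_i)/(\lambda+2)$, exactly the paper's \eqref{dervar1} --- so your justification of $\E[f_Q(g)^4]\le\E[g^4]=3$ in step four does not stand as written (note also that the constrained minimizer has $\E[f_Q(g)^2]=1+\lambda/2>1=\E[g^2]$, so $f_Q(g)$ genuinely has larger moments than $g$ here). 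Since $\lambda=2e_Q'$ is small the thresholds are nearly midpoints and a bound of the form $\E[f_Q(g)^4]\le 3+O(e_Q')$ can be recovered, which is all that is needed for $\eta^2\lesssim(2+e_Q')^2$; the paper itself only asserts this via ``similar computations,'' so you are not worse off, but you should not claim the unconstrained stationarity condition verbatim.
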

\begin{proof}
To solve this optimization problem, we use Lagrange Multipliers to solve $\nabla MSE = \lambda \nabla \mu$ with constraint $\mu = 1$. Equivalently,
\begin{equation}\label{dervar1}
\int_{\tau_i}^{\tau_{i+1}} ((\lambda + 2) \tau - 2 m_i )p_{g}(\tau)\,d\tau= 0,~~~~~~\tau_i = \frac{m_i + m_{i-1}}{\lambda+2}, ~~~~~~\forall i.
\end{equation}
Then similar to the computations in Section \ref{no_restriction}, we have $\sigma^2 = \E[f_Q(\tg)^2] - \mu^2$ and
\begin{equation}
1 = \mu = \sum_{i = 1}^Q \int_{\tau_i}^{\tau_{i+1}} \frac{2}{2+\lambda} m_i^2 p_{g}(\tau) \,d\tau
= \frac{2}{2+\lambda} \E[f_Q(\tg)^2].
\end{equation}
\noindent Define $e_Q':= \min_{m_i, \tau_i} MSE~$ under the constraint $\mu = 1$, then $
e_Q' = \E[f_Q(g)^2] - 2 \E[f_Q(g)g] + 1 
= \lambda/2 = \sigma^2$ and $\eta^2 \leq 3(e_{Q}' + 2)^2$ by similar compututations as in Section \ref{no_restriction}.

Next we analyze the relationship between $e_Q$ and $e_Q'$.
Let $\bm = (m_1, \dots, m_{Q})$ and let the optimal quantization levels be $\{m_i^*, ~\tau_i^*\}$. Treat the $\text{MSE} = \text{MSE}(m_i,~\tau_i^*) $ and $\mu = \mu(m_i,~\tau_i^*)$ as functions of $\{m_i\}$ evaluated at $\{\tau_i^*\}$. It suffices to find the local Lipschitz constants for $\mu$ and the MSE near $\{m_i^*\}$ , then 
\begin{equation}
\left\| \frac{d \mu}{d \bm} \right\|_1 = \sum_{i=1}^{Q} \left|\frac{d \mu}{d m_i}\right|
= \sum_{i=1}^{Q} \frac{1}{2 \sqrt{2 \pi}} \, |\int_{\tau_i^*}^{\tau_{i+1}^*} \tau e^{-\tau^2/2} \, d\tau |\gtrsim \tilde{C},
\end{equation}
\noindent for some $\tilde{C}>0$. Define $\delta := e_Q/ \| \frac{d \mu}{d \bm} \|_1 \leq e_Q/\tilde{C}$. Then by the continuity of $\mu$, there exists $\{m_i'\}$ that lies inside the $\delta-$ball of $\{m_i^*\}$ such that $\mu(m_i', ~\tau_i^*) = 1$. Next,
\begin{align*}
\left\|\frac{d \text{MSE}}{d \bm}\right\|_1 
 &\leq 2 \sum_{i=1}^{Q} \left|\int_{\tau_i^*}^{\tau_{i+1}^*} (m_i^* - \tau) p_g \, d\tau\right|+  2\sum_{i=1}^{Q}\left|\int_{\tau_i^*}^{\tau_{i+1}^*} \delta p_g \, d\tau\right|\\ 
&\leq 2 \delta, 
\end{align*}
\noindent gives $|e_{Q}' - e_{Q}| \leq 2 \delta^2 = 2 e_{Q}^2/\tilde{C}^2$. Applying \eqref{e_Q}, we have $
e_{Q}'\lesssim Q^{-2} + Q^{-4} .$ Substituting this expression for $e_{Q'}$ into the above bounds on $\mu$, $\sigma^2$, and $\eta^2$ gives the desired result.
\end{proof}

\begin{rmk}[Comparison of proposed method to standard Lloyd-Max quantization]
Comparing the error bound from our proposed method \eqref{eq: proposed method bound} and those of Lloyd-Max quantization \eqref{conclusionnoise}, we see that the former is proportional to $1/\sqrt{M}$ whereas the latter has a term which does not decrease with $M$.  Thus, as the number of observations $M$ increases, the proposed method gives much more accurate recovery.
\end{rmk}
\subsection{Quantization robustness} \label{robust}

We next consider the case where $\| \bx\|_2$ is approximately 1 and study the robustness of the Lloyd-Max quantizer to the unit norm assumption. The following is a simple corollary of Theorem \ref{plan} which removes the assumption that $\| \bx \|_2 = 1$ by rescaling. 
\begin{cor} \label{plancordelta} 
Assume that $\mu \bx \in K$ and let $d(K) := w(D(K,\mu \bx))^2$. Then with high probability, the solution $\bx'$ to problem \eqref{goal} satisfies
\begin{equation}
\left\| \bx' - \mu_{p} \frac{\bx}{\|\bx\|_2} \right\|_2 \lesssim \frac{\sqrt{d(K)} \sigma_{p} + \eta_{p}}{\sqrt{M}}
\end{equation}
\noindent where $\mu_{p} := \E[f_Q(\nx g) g]$,$\sigma_{p}^2 := \E[(f_Q(\nx g) - \mu_{p} g)^2]$, $\eta_{p}^2 := \E[(f_Q(\nx g) - \mu_{p}g)^2 g^2]$.
\end{cor}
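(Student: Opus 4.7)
The plan is to apply Theorem \ref{plan} to a rescaled problem in which the signal has been normalized to the unit sphere, and then absorb the original norm of $\bx$ into a modified non-linearity. Concretely, set $\tilde\bx := \bx/\|\bx\|_2 \in S^{N-1}$ and define $\tilde f(t) := f_Q(\|\bx\|_2 \, t)$. Since $\bPhi$ is linear, we have the identity
\begin{equation*}
\by_i \;=\; f_Q\bigl((\bPhi \bx)_i\bigr) \;=\; f_Q\bigl(\|\bx\|_2 \,(\bPhi \tilde\bx)_i\bigr) \;=\; \tilde f\bigl((\bPhi \tilde\bx)_i\bigr),
\end{equation*}
so the observation model \eqref{model} is satisfied with signal $\tilde\bx$ and non-linearity $\tilde f$. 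The optimization problem \eqref{goal} is unchanged, so the same $\bx'$ is produced.

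Next I would verify that the hypotheses of Theorem \ref{plan} are met for the rescaled problem. We have $\tilde\bx \in S^{N-1}$ by construction. The theorem also requires $\mu_{\tilde f}\tilde\bx \in K$, where $\mu_{\tilde f} := \E[\tilde f(g)g] = \E[f_Q(\|\bx\|_2 g) g] = \mu_p$; this is exactly the standing hypothesis $\mu_p \bx/\|\bx\|_2 \in K$ built into the statement (note that $d(K)$ is defined via the tangent cone at the same point). Computing the remaining quantities of Theorem \ref{plan} for $\tilde f$ gives
\begin{equation*}
\sigma_{\tilde f}^2 \;=\; \E\bigl[(\tilde f(g) - \mu_{\tilde f} g)^2\bigr] \;=\; \sigma_p^2, \qquad \eta_{\tilde f}^2 \;=\; \E\bigl[(\tilde f(g) - \mu_{\tilde f} g)^2 g^2\bigr] \;=\; \eta_p^2,
\end{equation*}
directly from the definitions of $\mu_p,\sigma_p,\eta_p$.

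Applying Theorem \ref{plan} to the rescaled problem then yields, with high probability,
\begin{equation*}
\Bigl\| \bx' - \mu_p \tilde\bx \Bigr\|_2 \;\lesssim\; \frac{\sqrt{d(K)}\,\sigma_p + \eta_p}{\sqrt{M}},
\end{equation*}
which is the claim after substituting $\tilde\bx = \bx/\|\bx\|_2$. There is no real obstacle here: the only thing to be slightly careful about is the observation that quantizing $\bPhi \bx$ is equivalent to applying a rescaled quantizer to $\bPhi \tilde\bx$, so that the nonlinear-measurement framework of Theorem \ref{plan} applies verbatim with the new trio $(\mu_p,\sigma_p,\eta_p)$ in place of $(\mu,\sigma,\eta)$.
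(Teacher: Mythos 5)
Your proposal is correct and is precisely the rescaling argument the paper has in mind: the text introduces this corollary as one that ``removes the assumption that $\|\bx\|_2 = 1$ by rescaling,'' and your construction of $\tilde\bx = \bx/\|\bx\|_2$ with the modified non-linearity $\tilde f(t) = f_Q(\|\bx\|_2 t)$ makes that one-line remark rigorous, with the trio $(\mu_p,\sigma_p,\eta_p)$ arising exactly as the parameters of Theorem \ref{plan} applied to $\tilde f$. No further comment is needed.
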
 
Assume the model $1- \delta \leq \|\bx\|_2 \leq 1 + \delta$ for $\delta$ small. As before, let $f_{Q(1)}$ be the optimal quantizer obtained from minimizing the MSE $\E[(f_{Q}(g) - g)^2]$.  Our result shows that the recovery rate is linearly proportional to the perturbation and is inversely proportional to the quantization level with quadratic rate.

\begin{cor}
Suppose $1-\delta \leq \nx \leq 1 + \delta$, and the quantizer of the form \eqref{fqdef} is the minimizer:
\begin{equation}
f_{Q(1)} = \argmin_{m_i, ~\tau_i} \E[(f_Q(g) - g)^2].
\end{equation}
\noindent Then with high probability, the solution $\bx'$ to problem \eqref{goal} satisfies
\begin{equation} 
\|\bx' - \bx\|_2 \lesssim \frac{\sqrt{d(K)} (Q^{-2} + Q^{-4} + \delta) + 2 - Q^{-2} + \delta}{\sqrt{M}} + Q^{-2} + \delta.
\end{equation}
\end{cor}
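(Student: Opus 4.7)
The plan is to invoke Corollary~\ref{plancordelta} with $f_Q = f_{Q(1)}$ and then pay two additional costs beyond the conclusion of that corollary: first, the bias $\|\mu_p \bx/\nx - \bx\|_2 = |\mu_p - \nx|$ from replacing the rescaled target $\mu_p \bx/\nx$ by the true $\bx$; and second, the perturbation of $\mu_p, \sigma_p, \eta_p$ away from the unit-norm values $\mu, \sigma, \eta$ already computed in Section~\ref{no_restriction}. The triangle inequality
\begin{equation*}
\|\bx' - \bx\|_2 \leq \|\bx' - \mu_p \bx/\nx\|_2 + |\mu_p - \nx|
\end{equation*}
splits the proof accordingly.

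For the bias, I would decompose $|\mu_p - \nx| \leq |\mu_p - \mu| + |\mu - 1| + |1 - \nx|$. From the computations in Section~\ref{no_restriction}, the centroid condition gives $\mu = 1 - e_Q$, hence $|\mu - 1| = e_Q \lesssim Q^{-2}$; by hypothesis $|1 - \nx| \leq \delta$. The remaining continuity piece $|\mu_p - \mu| \lesssim \delta$ I would obtain by writing $\mu(a) := \E[f_{Q(1)}(ag) g]$ and using $g\,p_g(g) = -p_g'(g)$ to get the closed form
\begin{equation*}
\mu(a) = \sum_{i=1}^{Q} m_i \bigl[p_g(\tau_i/a) - p_g(\tau_{i+1}/a)\bigr],
\end{equation*}
which is smooth in $a$; a mean value estimate then bounds $|\mu_p - \mu|$ by $\delta$ times the supremum of $|\mu'(a)|$ on a neighborhood of $a=1$. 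This contributes the standalone $Q^{-2} + \delta$ term in the claimed bound.

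The same piecewise-constant plus mean-value strategy, applied to $\E[f_{Q(1)}(ag)^2]$ and $\E[f_{Q(1)}(ag)^2 g^2]$, should yield $|\sigma_p^2 - \sigma^2| \lesssim \delta$ and $|\eta_p^2 - \eta^2| \lesssim \delta$. Combined with the Section~\ref{no_restriction} estimates $\sigma^2 \lesssim Q^{-2} - Q^{-4}$ and $\eta^2 \lesssim 3(2 - Q^{-2})^2$, this produces $\sigma_p \lesssim Q^{-2} + Q^{-4} + \delta$ and $\eta_p \lesssim 2 - Q^{-2} + \delta$. Plugging these into Corollary~\ref{plancordelta} and adding the bias term from the previous paragraph gives the stated bound.

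The main obstacle is the uniform-in-$Q$ continuity constant. The derivative of $\mu$ at $a=1$ works out to $\mu'(1) = \sum_i m_i [\tau_i^2 p_g(\tau_i) - \tau_{i+1}^2 p_g(\tau_{i+1})]$, and the analogous expressions for $(\sigma^2)'$ and $(\eta^2)'$ are sums of the form $\sum_i c_i \tau_i^k p_g(\tau_i)$ with $c_i$ built from the Lloyd--Max levels $m_i, m_{i-1}$; although the outer thresholds grow with $Q$, the Gaussian factor $p_g(\tau_i)$ decays super-polynomially, so the sums stay bounded by an absolute constant. Making this trade-off quantitative, so that the resulting perturbation constants are truly independent of $Q$, is the real technical content of the proof.
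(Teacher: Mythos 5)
Your proposal follows essentially the same route as the paper's proof sketch: invoke Corollary~\ref{plancordelta}, pay the bias $|\mu_p - \|\bx\|_2| \leq |\mu_p - \mu| + |\mu - 1| + |1-\|\bx\|_2| \lesssim Q^{-2} + \delta$ via the triangle inequality, and show $\mu_p, \sigma_p, \eta_p$ are $\mathcal{O}(\delta)$-perturbations of the unit-norm quantities from Section~\ref{no_restriction}. The only difference is organizational --- the paper bounds $|\mu - \mu_p|$ by integrating directly over the shifted quantization cells $[s_i, \tau_{i+1}]$ while you differentiate the closed form $\mu(a)$ and apply the mean value theorem, which is the same first-order computation --- and both treatments leave the uniform-in-$Q$ boundedness of the resulting sum (the paper's ``Schwartz space'' remark, your flagged ``trade-off'') at the level of an assertion.
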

\noindent\textit{Proof sketch.} 
Without loss of generality, assume $\|\bx\|_2 = 1+\varepsilon$, where $0<\varepsilon < \delta$.  Then,
\begin{equation*}
\mu_p = \E[f_{Q(1)}((1+\varepsilon) g) g] = \sum_{i=1}^Q \int_{\tau_i}^{s_i} m_i \tau p_g d\tau + \int_{s_i}^{\tau_{i+1}} m_{i+1} \tau p_g d\tau
\end{equation*}
\noindent where $s_i = \tau_{i+1}/(1+\varepsilon) = \tau_{i+1}(1-\varepsilon) + \mathcal{O}(\varepsilon^2)$. We can bound the difference $|\mu-\mu_p|$ as
\begin{align*}
|\mu - \mu_p| &\leq \sum_{i=1}^Q \int_{s_i}^{\tau_{i+1}} |m_{i+1}-m_i| \tau p_g \, d\tau\\
&= \frac{\varepsilon}{\sqrt{2 \pi}}  \sum_{i=1}^Q |m_{i+1}-m_i| \tau_{i+1}^2 e^{-\tau_{i+1}^2/2} + \mathcal{O}(\varepsilon^2) \\
&\lesssim \frac{\varepsilon}{\sqrt{2 \pi}}  \sum_{i=1}^Q |\tau_{i+1}|^3 e^{-\tau_{i+1}^2/2} + \mathcal{O}(\varepsilon^2).
\end{align*}
Since Gaussian functions lie in the Schwartz space, this sum converges absolutely. Thus, $|\mu - \mu_p| = \mathcal{O}(\varepsilon)$, implying that $|\mu_p - (1+\varepsilon)| \leq |\mu - 1| + |\mu-\mu_p| + \varepsilon = e_Q + \mathcal{O}(\varepsilon)$.

Next, since $\sigma^2 = \E[f_{Q(1)}(g)^2] - \mu^2$, $\sigma_p^2 = \E[f_{Q(1)}((1+\varepsilon)g)^2] - \mu_p^2$, it suffices to find an upper bound for $U_\sigma=|\E[f_{Q(1)}(g)^2]-\E[f_{Q(1)}((1+\varepsilon)g)^2]|$. By a similar argument, 
\begin{align} \label{sigmap}
U_\sigma 
&\leq \frac{4 \varepsilon}{\sqrt{\pi}} \sum_{i=1}^Q |\tau_{i+1}|^3 e^{-\tau_{i+1}^2}+ \mathcal{O}(\varepsilon^2) = \mathcal{O}(\varepsilon),
\end{align}
and,
\begin{align*}
|\sigma^2 - \sigma_p^2| &\leq |\E[f_{Q(1)}(g)^2]-\E[f_{Q(1)}((1+\varepsilon)g)^2]| + |\mu^2 - \mu_p^2| \\
&= \mathcal{O}(\varepsilon) + |\mu+\mu_p||\mu-\mu_p| \\
&= \mathcal{O}(\varepsilon).
\end{align*}
Finally, $\eta_p \leq \sqrt{3}(2-e_Q) + \mathcal{O}(\varepsilon)$, since $U_\eta := |\E[f_Q((1+\varepsilon)g)^4] - \E[f_Q(g)^4]|$ is bounded as 
\begin{align*}
U_\eta 
&= \frac{\varepsilon}{\sqrt{\pi}} \sum_{i=1}^Q |m_{i+1}^4 - m_i^4|\tau_{i+1} e^{-\tau_{i+1}^2} + \mathcal{O}(\varepsilon^2)\\
&\leq \frac{16\varepsilon}{\sqrt{\pi}} \sum_{i=1}^Q |\tau_{i+1}^5| e^{-\tau_{i+1}^2} + \mathcal{O}(\varepsilon^2) \\
&= \mathcal{O}(\varepsilon).
\end{align*}
\noindent Corollary \ref{plancordelta} and taking $e_Q \simeq Q^{-2}$ yields the desired result. 
\qed

\section{Numerical Experiments} \label{sec4}
Figure \ref{fig:f1} plots reconstruction errors under the assumption that $\|\bx\|_2 = 1$. All quantizers are computed using the Lloyd-Max algorithm \cite{LloydMax}. We display the error $\|\bx' - \bx\|_2$ and normalized error $\|\frac{\bx'}{ \|\bx'\|_2} - \bx \|_2$ for each reconstruction method. The dimension $N$ of the signal $\bx$ is 200 
and the number of measurements is $M = 50000$. 
Observe that the $K$-Lasso with restriction to $\mu = 1$ gives much better reconstruction than that with no restriction. 

Figure \ref{fig:f2} compares the reconstruction error of signals with unit norm and perturbed norms (1.05 here) under the same quantization levels. We only consider the recovery error of the type $\|\bx' - \bx\|_2$ for simplicity.
\begin{figure}[!tbp]
  \centering
  \subfloat[Unit Norm]{\includegraphics[width=0.5\textwidth]{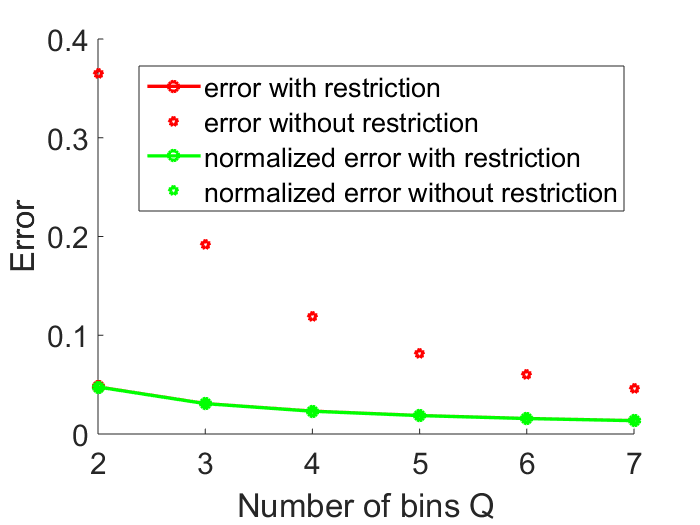}\label{fig:f1}}
  \hfill
  \subfloat[Perturbed Norm]{\includegraphics[width=0.5\textwidth]{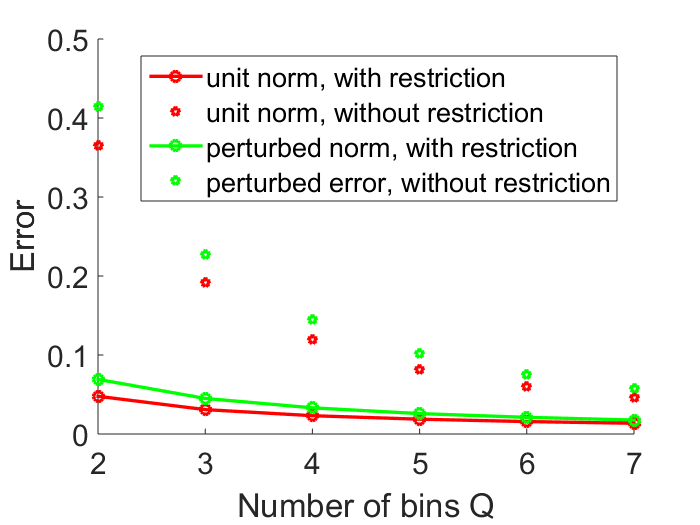}\label{fig:f2}}
  \caption{Left: Plot of recovery error vs number of bins for $k = 7$, $M = 50000$ and $N = 200$. The first, third and last lines overlap. Right: The recovery of unit norm signal and perturbed norm signal.
}\label{fig}
\end{figure}
\section{Conclusion} \label{sec5}
This letter extends existing work on the non-linear Lasso problem \cite{planNonlinear} to quantized Compressed Sensing with two different assumptions on the signal norm. When the signal norm is known, we show that the recovered signal converges to the actual signal with a quadratic rate in the quantization level. We also show that the quantizer obtained from restricting $\mu = 1$ gives a better recovery rate than the conventional Lloyd-Max quantizer. When the norm is slightly perturbed, we show that the recovery rate of the conventional Lloyd-Max quantizer is inversely proportional to the level of quantization with quadratic rate, and also linearly proportional to the degree of perturbation.

\section{Acknowledgement} \label{sec6}
This work was supported by NSF CAREER $\#1348721$, NSF DMS $\#1045536$, NSERC 22R23068, and the Alfred P. Sloan Foundation.

\bibliography{mybibfile}

\end{document}